\documentclass[letterpaper,twocolumn,pra,aps]{revtex4}\bibliographystyle{apsrev4-1}%
\newcommand\SIFT[0]{\operatorname{\mathrm{SIFT}}}
\newcommand\CTRL[0]{\operatorname{\mathrm{CTRL}}}
\newcommand\bket[1]{|\mathfrak{#1}\rangle}

\newcommand\bitz[0]{\mathfrak{0}}
\newcommand\bito[0]{\mathfrak{1}}

\usepackage{mathrsfs}
\usepackage{amsmath,amssymb,amsthm}
\newcommand\ket[1]{\,|#1\rangle}
\newcommand\fet[1]{\,|#1\rangle}

\newcommand\Span[0]{\operatorname{Span}}
\newcommand\braket[2]{\langle #1\mid#2\rangle}

\IfFileExists{fmsfutm.fd}{%
\usepackage{fourier}\usepackage{times}%
\newcommand\splus[0]{+}%
\newcommand\sminus[0]{-}%
}{\usepackage{mathptmx}%
\newcommand\splus[0]{\text{\normalfont{\raisebox{-0.15ex}{\texttt{+}}}}}%
\newcommand\sminus[0]{\text{\normalfont{\raisebox{-0.15ex}{\texttt{-}}}}}%
}
\newcommand\spm[0]{\pm}
\usepackage[cm]{fullpage}
\newtheoremstyle{note}%
{3pt}%
{3pt}%
{\itshape}%
{}%
{\itshape}%
{\ ---}%
{2pt}%
{\thmnote{#3}}%
\theoremstyle{note}
\newtheorem{lemma}{Lemma}
\newcommand\fsection[1]{\textit{#1\,---}}
\newcommand\fsubsection[1]{\textit{#1.}}

\usepackage{color}
\definecolor{mbcolor}{rgb}{0.99,0.14,0.02}
\definecolor{weirdcolor}{rgb}{0.5,0.00,1.00}

\newcommand\remove[1]{\relax}

\usepackage{color}
 \definecolor{earth}{rgb}{0.631,0.165,0.161}
\usepackage[colorlinks,hyperindex,citecolor=earth,linkcolor=earth,urlcolor=earth,bookmarks=true,bookmarksopen,pdfstartview=FitH]{hyperref}
\begin{document}
\author{Michel Boyer$^{1}$,  Tal Mor$^2$\\
\small 1. D\'epartement IRO, Universit\'e de Montr\'eal,
  Montr\'eal (Qu\'ebec) H3C 3J7, {Canada}. \\
\small 2. Computer Science Department, Technion, 
  Haifa 32000, {Israel}.}

\title{On the Robustness of (Photonic)
Quantum Key Distribution with Classical
Alice}
\date{\today}

\begin{abstract}
Quantum Key Distribution (QKD) with classical Bob has recently been suggested
and proven robust.
Following this work, QKD with classical Alice was also suggested and
proven robust. The above protocols are ideal in the sense that they make
use of qubits.
However, in the past, well-known
QKD protocols that were proven
robust and even proven unconditionally secure, when qubits are used,
were found to be totally insecure when
photons are used. This is due to sensitivity to photon losses (e.g.,
Bennett's two-state protocol)
or sensitivity to losses combined with multi-photon states (e.g.,
the photon-number-splitting attack on the weak-pulse
Bennett-Brassard protocol, BB84).
Here we prove that 
QKD with classical Alice is still robust when photon
losses and even multi-photon states are taken into account.
\end{abstract}

\maketitle

\fsection{Introduction}%
A two-way Quantum Key Distribution (QKD) protocol 
in which one of the parties (Bob)
uses only classical operations was recently introduced~\cite{boyer:140501}. 
A very interesting extension in which the originator 
always sends the same state 
$\ket{\splus}=(\bket{0}+\bket{1})/\sqrt{2}$ (see \cite{Note4}),
while in~\cite{boyer:140501} all four states, $\bket{0}$,
$\bket{1}$, $\ket{\splus}$ and 
$\ket{\sminus}=(\bket{0}-\bket{1})/\sqrt{2}$, are sent,
is suggested by Zou \textit{et al.}~\cite{zou:052312}. 
In both those ``semi-quantum'' key distribution (SQKD) protocols 
the qubits  
go from the originator Alice to (classical) Bob and back to Alice.
Bob 
either reflects a received qubit without touching its state (CTRL), 
or measures it in the standard (classical) basis and sends back his 
result as $\bket{0}$ or $\bket{1}$ (SIFT).

Following~\cite{Boyer-Mor-QPH2010} we prefer to call 
the originator 
in~\cite{zou:052312}  
Bob (and not Alice), and to call the classical party Alice:
usually in quantum cryptography, Alice is
the sender of some non-trivial data, 
e.g., she is the one choosing the quantum states. 
The originator in~\cite{zou:052312} does not have that special role, 
as the state $\ket{\splus}$ is always sent 
(and we could even ask Eve to generate it). 
The classical person is then the one actually choosing a basis
and knowing which of the three state ($\bket{0}$, $\bket{1}$, or $\ket{\splus}$) is
sent back to the originator, thus it   
is natural to name that classical person Alice.
We call the originator Bob, 
and we call the SQKD protocol of Zou et al ``QKD with classical Alice''.
Note that QKD with classical Alice was also suggested, independently
of~\cite{zou:052312}, by
Lu and Cai~\cite{LC2008}.
As proven in~\cite{Boyer-Mor-QPH2010},
QKD with Classical Alice (the protocol suggested in~\cite{zou:052312}) 
is  completely
robust against eavesdropping. 

Here we use Fock-space representation to
extend the QKD with classical Alice protocol to the important case
in which 
Alice and Bob use photons and not merely ideal qubits. 
We first extend the proof of robustness to include photon loss, 
and subsequently, also multi-photon states.
Such 
extensions are far from trivial; on the contrary, often, robustness is
actually lost when trying to deal with photons rather than qubits.

As a first example, in the two-state scheme   
(known as the Bennett'92 --- B92 scheme), when qubits
are assumed to be carried by photons, photon losses cause a severe problem: 
if %
Eve can replace a lossy channel by a lossless one, she might be able to get
full information without causing errors at all, using an ``un-ambiguous state
discrimination'' attack. 
See %
%supplemental material \cite[sect. A]{SOM}
appendix, sect \ref{sectA}.
As a second example, in the four-state scheme  
(known as the 
Bennett-Brassard'84 ---  BB84 scheme), when qubits
are assumed to be carried by photons, photon losses combined with multi-photon
pulses cause a severe problem: 
if %
Eve can replace a lossy channel by a lossless one, and can measure photon
numbers (via a non-demolition measurement), she might be able to get
full information without causing errors at all, using a ``photon number
splitting'' attack.
See %
%supplemental material \cite[sect. B]{SOM}.
appendix, sect. \ref{sectpns}.

\fsection{The Fock space notations}%
The Fock space notations that serve as an extension of a qubit are as follows:
in the standard ($z$) basis, 
the Fock basis vector $\fet{0,1}$ stands for a single photon
in a qubit-state $\bket{0}$
and the Fock basis vector $\fet{1,0}$ stands for a single photon
in a qubit-state $\bket{1}$.
Naturally, the Hadamard ($x$) 
basis qubit-states are given by the superposition  
of those Fock states so that 
$[\fet{0,1} \pm \fet{1,0}]/\sqrt{2}$ stand for a single photon
in a qubit-state $\ket{\spm}=(\bket{0} \pm \bket{1})/\sqrt{2}$. 
The general state of this photonic qubit can then be written as 
$\alpha \fet{0,1} + \beta \fet{1,0}$, with $|\alpha|^2+|\beta|^2=1$.   

This photonic qubit lies in a much larger space called Fock space.
The first natural extension is  
$\fet{0,0}$ that describes the lack of photons (the vacuum state), 
a case of great practical importance,
as it enables dealing properly with photon loss. 
The next extension of a very high practical importance is that 
$\fet{2,0}$ describes two (indistinguishable) photons in the 
same qubit-state $\bket{1}$, 
$\fet{0,2}$ describes two (indistinguishable) photons in the 
same qubit-state $\bket{0}$, 
and $\fet{1,1}$ describes two (in this case, distinguishable) photons,
one in the 
qubit-state $\bket{0}$, 
and one in the 
qubit-state $\bket{1}$. This case (a six dimensional space, describing two or
less photons) 
was found very important in the photon number splitting attack~\cite{BLMS00},
as prior to that analysis, experimentalists assumed that the only impact
of high loss rate is on the bit-rate and not on security.

In general, if a single photon can be found in two orthogonal states
(these are called ``modes'' when discussing photons), then
$\fet{n_\bito,n_\bitz}$ represents $n_\bito$ (respectively $n_\bitz$) indistinguishable
photons in a qubit-state $\bket{1}$ (resp. $\bket{0}$). 
The numbers $n_\bitz$ and $n_\bito$ are then called the
occupation numbers of the two modes.
From now on, the notations $\bket{0} \equiv \ket{0,1}$, $\bket{1} \equiv \ket{1,0}$, $\ket{\splus} = (\ket{0,1}+\ket{1,0})/\sqrt{2}$
and $\ket{\sminus} = (\ket{0,1}-\ket{1,0})/\sqrt{2}$ will be used interchangeably. 
Similarly, since the single photon can also be found in 
$\ket{0,1}_x \equiv \ket{\splus}$ and
$\ket{1,0}_x \equiv \ket{\sminus}$ (namely, the $x$ basis), then 
$\fet{n_\sminus,n_\splus}$ represents 
$n_{\sminus}$ 
(resp. $n_{\splus}$) indistinguishable photons in qubit-state 
$\ket{\sminus}$ (resp. $\ket{\splus}$). 

More generally, one may consider more than two modes. For instance, the four
modes $\fet{n_{\bito b},n_{\bito a},n_{\bitz b},n_{\bitz a}}$ are the generalization of qu-quadrit
(say a photon in one of two arms $a$ or $b$, and one of two orthogonal
polarizations, denoted $\bitz$ or $\bito$). 

\fsection{The classical Alice protocol, dealing with losses}%
The originator Bob sends Alice qubits in the state 
$\ket{\splus}$ and
keeps in a quantum
memory all qubits he received back from her~\cite{Note2}. When $N$ qubits have been
sent and received, (classical)  
Alice announces 
publicly which qubits she reflected (without disturbing them); 
the originator Bob  
then checks that he received  
$\ket{\splus}$ and not 
$\ket{\sminus}$ 
on those positions (CTRL).
For the (SIFT) qubits measured by Alice in the standard (classical) 
$\{\bket{0};\bket{1}\}$ basis, 
a sample is chosen to be checked
for errors (TEST). 
The remaining SIFT bits serve for obtaining a final, secure key,
via error correction and privacy amplification, as in any conventional 
QKD protocol.
 
\fsubsection{Defining the (limited) ``photonic QKD with classical Alice'' protocol} %
The qubits  
are embedded in the 3-dimensional, 2-mode Fock space 
containing the qubit states $\ket{1,0}$ and $\ket{0,1}$ and the vacuum state
$\ket{0,0}$.
The Hilbert space describing Alice+Bob states is (for now)
the subspace
\begin{equation}\label{spacelosses}
\mathscr{H}_{AB} = \Span\big(\ket{0,1},\ket{1,0},\ket{0,0}\big) \subseteq
\mathscr{F}
\end{equation}
of the more general 2-mode Fock space  
($\mathscr{F}$).

In this photonic protocol, Bob is always sending the $\ket{\splus}$ state. %
Losses or vacuum states are modeled by the state $\ket{0,0}$, and thus, we must
define Alice's and Bob's operations when such states occur. 
Losses normally come from the interaction with the environment; 
as usual, the (worst case) analysis gives Eve total control
on the environment.
Classical Alice can  
either SIFT or CTRL~\cite{zou:052312,Boyer-Mor-QPH2010}. 
In the SIFT mode, Alice's ``measurement'' is described
(WLG)
with the adjunction of a probe,
extending $\mathscr{H}_{AB}$ to $\mathscr{H}_A \otimes \mathscr{H}_{AB}$, 
a unitary transformation and a measurement of her probe 
in the standard basis. Such a description
is meant to match the general framework of measurements in quantum information, and
may not correspond to the actual physical measurement performed by Alice.
Using the Fock-space notations, 
it is assumed that Alice adds a two-mode probe in a state $\ket{0,0}_A$ 
to get the state $\ket{0,0}_A\ket{\splus}_{AB}$.
Alice then performs one of the following
two operations (with $\ket{n_{\bito},n_{\bitz}}_{AB}$ in the $z$, i.e. the standard basis):
\begin{align}
U_{\CTRL}\ket{0,0}_A\ket{n_{\bito},n_{\bitz}}_{AB} 
&= \ket{0,0}_A\ket{n_{\bito},n_{\bitz}}_{AB} \label{donothing}\\
U_{\SIFT}\ket{0,0}_A\ket{n_{\bito},n_{\bitz}}_{AB} 
&= \ket{n_{\bito},n_{\bitz}}_A \ket{n_{\bito},n_{\bitz}}_{AB} \label{premeasure}
\end{align}
then she measures her probe in the standard classical basis and
sends Alice+Bob's state to Bob; 
in the case described by Eq.~\eqref{donothing} (CTRL) 
she needs not measure, 
still the probe and its measurement are added there only
to make the description uniform; 
Bob's original state ($\ket{\splus}_{AB}$) is reflected back to him, undisturbed.
In the case described 
by Eq.~\eqref{premeasure} (SIFT), Alice gets the outcome 
$n_{\bito}n_{\bitz}$, and the state 
$\ket{n_{\bito},n_{\bitz}}_{AB}$ is sent to Bob.
Note that, in order to analyze the enlarged space of the protocol, 
we {\em had to}
add the definition of Alice's operation on the added state,
$\ket{0,0}_{AB}$.  
Our choice of 
$U_{\SIFT}\ket{0,0}_A\ket{0,0}_{AB} 
= \ket{0,0}_A \ket{0,0}_{AB} $ is the most natural way of extending 
Alice's SIFT operation, and it thus becomes part of our definition of the
protocol ``Photonic-QKD with classical Alice''.

Naturally, 
when Bob measures in the classical ($z$) basis, 
he also measures the same three
states as Alice, $\ket{n_{\bito},n_{\bitz}}$ with $n_{\bitz}+n_{\bito} \le 1$.
However, the space $\mathscr{H}_{AB}$ \eqref{spacelosses}  
is also spanned by the orthonormal 
basis $\{\ket{\splus}$, $\ket{\sminus}$, $\ket{0,0}\}$, thus  
Bob (who is not limited to being classical) can perform a
measurement in this generalized $x$ basis of the qutrit. 

\fsubsection{Eve's attack on the (photonic) classical Alice protocol} %
Eve performs her attack in both directions; 
from Bob to Alice, Eve applies $U$; from Alice to Bob, Eve applies $V$.
We may assume, WLG, that Eve is using a fixed probe space $\mathscr{H}_E$ 
for her attacks in both directions.
The attack from Bob to Alice produces a state of the form
$ \ket{E_{01}}\ket{0,1}_{AB} +\ket{E_{10}}\ket{1,0}_{AB} +
\ket{E_{00}}\ket{0,0}_{AB} $ (namely  
$\sum_{n_{\bito},n_{\bitz}\ |\ n_{\bitz}+n_{\bito} \le 1}
\ket{E_{n_{\bito}n_{\bitz}}}
\ket{n_{\bito},n_{\bitz}}_{AB}
\in \mathscr{H}_E\otimes\mathscr{H}_{AB}
$),   
where the $\ket{E_{ij}}$ are non normalized 
(and potentially non-orthogonal)
vectors in $\mathscr{H}_E$. 
With Alice's probe attached we obtain
\begin{equation}
 \Psi = \sum_{n_{\bito},n_{\bitz}\ |\ n_{\bitz}+n_{\bito} \le 1}
\ket{E_{n_{\bito}n_{\bitz}}} \ket{0,0}_A
\ket{n_{\bito},n_{\bitz}}_{AB} \ ,
\end{equation}
in 
$\mathscr{H}_E\otimes \mathscr{H}_A \otimes \mathscr{H}_{AB}$.
In particular, if Eve does nothing then 
$\ket{E_{10}} = \ket{E_{01}} = \ket{E_{00}} \equiv \ket{E}$
and the state in Alice+Eve's hands, prior to Alice's operation, is 
$\ket{E}\ket{0,0}_A\ket{\splus}_{AB}\ $.

Going back to the general case, if Alice applies $U_{\CTRL}$, 
then the state in Eve+Alice hands (after Alice's CTRL action) is still $\ket{\Psi}$.
However, if Alice applies $U_{\SIFT}$, 
the resulting global state in Eve+Alice's hands is
\[
\sum_{n_{\bito},n_{\bitz}\ |\ n_{\bitz}+n_{\bito} \le 1}
\ket{E_{n_{\bito}n_{\bitz}}}\ket{n_{\bito},n_{\bitz}}_A\ket{n_{\bito},n_{\bitz}}_{AB}
\]
and after Alice has measured 
her probe, she gets some output ($\{00,01,10\}$), 
and some (non normalized) residual state that she sends back to Bob. 

Once Alice has performed her measurements and sent $\ket{i,j}_{AB}$ back to Bob via
Eve, the resulting global state (fully in Eve's hands)  is
\begin{center}
\begin{tabular}{|c|l|}\hline
\textit{Measurement} & \textit{State}\\ \hline\hline
$00$ & $ \ket{\psi_{00}} = \ket{E_{00}}\ket{0,0}_{AB}$ \\ \hline
$01$ & $ \ket{\psi_{01}} = \ket{E_{01}}\ket{0,1}_{AB}$ \\ \hline
$10$ & $ \ket{\psi_{10}} = \ket{E_{10}}\ket{1,0}_{AB}$ \\  \hline
CTRL & $\ket{\psi} = \ket{\psi_{00}}+\ket{\psi_{01}}+\ket{\psi_{10}}$ \\ \hline
\end{tabular}
\end{center}
where $\ket{\psi_{ij}}$ are not normalized, and where the $\ket{E_{ij}}$ were chosen by Eve. 
Eve now applies a unitary $V$ on $\mathscr{H}_E\otimes\mathscr{H}_{AB}$ 
and then
sends Bob his part of the resulting state.

\fsubsection{A proof of robustness} %
For Eve to stay undetectable, if Alice measured $\ket{0,0}$ 
(namely, the outcome $00$) in the SIFT mode, then Bob should have 
a probability zero of measuring $01$ or $10$, 
thus, a probability zero of receiving the states $\ket{0,1}$ or $\ket{1,0}$.
Similarly if Alice measured $10$ ($01$), 
then Bob should have a probability zero of measuring $01$ ($10$);
he could however get a loss, $00$.
The resulting (non normalized) Eve+Bob residual states thus take the form
$\ket{\psi_{00}'} = V \ket{\psi_{00}}  =
 \ket{H_{00}}\ket{0,0}_{AB} $ when a loss arrives, and otherwise, 
\begin{align}
\ket{\psi_{01}'} &= V \ket{\psi_{01}}   = \ket{F_{01}}\ket{0,1}_{AB} +
\ket{H_{01}}\ket{0,0}_{AB} \nonumber \\
\ket{\psi_{10}'} &= V \ket{\psi_{10}}  =
 \ket{F_{10}}\ket{1,0}_{AB} +  \ket{H_{10}}\ket{0,0}_{AB} \ . \label{got10}
\end{align}
Finally, $V$ being linear, the (normalized) residual state if Alice applied CTRL is 
$\ket{\psi'} \equiv V\ket{\psi} =
\ket{\psi_{00}'}+\ket{\psi_{01}'}+\ket{\psi_{10}'}$.

In order to check CTRL bits, Bob measures $\ket{\psi'}$
in the $x$ basis and checks if he gets a photon in the illicit state $\ket{\sminus}$. 
To avoid that, Eve must make sure that the overlap between Eve-Bob's state 
$\ket{\psi'}$ and Bob's state 
$ \ket{\sminus}$ is zero. 
This results with another limitation on Eve's attack:
the norm of ${}_{AB}\langle{\sminus}\big|\big(\ket{F_{01}}\ket{0,1}_{AB}\big)
+ {}_{AB}\langle{\sminus}\big|\big(\ket{F_{10}}\ket{1,0}{}_{AB}\big)$
must be $0$; 
namely, $\ket{F_{01}}\braket{\sminus}{0,1} + \ket{F_{10}}\braket{\sminus}{1,0} 
= (\ket{F_{01}}-\ket{F_{10}})/\sqrt{2}=0$, i.e.
$\ket{F_{01}} = \ket{F_{10}} = \ket{F}$ for some (non normalized) state $\ket{F}\in \mathscr{H}_E$.
The final global states \eqref{got10}  
if Alice measured $01$ and $10$ are thus (respectively) 
\begin{align}
 \ket{F}\ket{0,1}_{AB} &+ \ket{H_{01}}\ket{0,0}_{AB} \nonumber \\ 
 \ket{F}\ket{1,0}_{AB} &+ \ket{H_{10}}\ket{0,0}_{AB} \label{final-table}
\ , 
\end{align}
and if Bob does not get a loss, 
Eve's final state is $\ket{F}$ whether Bob measures $\ket{0,1}$ i.e., 
the bit $\bitz$, or $\ket{1,0}$, i.e., the bit $\bito$.
Eve's final probe is, thus, 
independent of all of Alice's and Bob's measurements, and is unentangled with their state.

Eve can thus get no information on %
the bits Alice and Bob agree upon without being detectable.
That reasoning can be done inductively bitwise to get robustness with $N$ qubits.

\fsection{The classical Alice protocol, dealing with losses and multi-photon
pulses}%
In practice, there are not just losses: %
when qubits are encoded using photon pulses, there may be more than
one photon per pulse, giving the eavesdropper more 
tools to get information on the SIFT bits.
We now allow the Hilbert space to contain all photonic states of the
above-mentioned two modes.
Namely, we consider all states $\ket{n_{\bito},n_{\bitz}}$ with $n_{\bitz}+n_{\bito} \ge 0$.
As before, we {\em must} %
specify Alice's and Bob's
operations on those states.

\fsubsection{Defining the (full) ``photonic QKD with classical Alice'' protocol} %
If Alice and Bob can distinguish one from more than one photon, extending the
results of the earlier section is rather trivial;
in brief, Eve becomes limited to the same space as in the previous section, or else she will 
be noticed.

The interesting extension is when Alice and Bob are limited, and cannot
tell %
a single photon pulse from a multi-photon pulse. 
It is conventional to say that they have ``detectors'' and not ``counters''.
This, of course, is in contrast to Eve who has counters, and who can do
whatever physics allows.

We now assume a specific realization of the Fock states, to make the limitation
on the measurements more clear.
We assume that the two classical states, $\bket{0}$
and $\bket{1}$, describe two pulses on the same arm, such that the photon
can either be in one pulse, in the other, or in a superposition such as
the (non-classical) state $\ket{\splus}$. 
Measurements are applied onto the two modes separately, 
using two detectors, 
thus a state $\ket{1,1}$ as well
as any state $\ket{n_{\bito},n_{\bitz}}$ with both $n_{\bito}\geq 1$ and $n_{\bito}\geq 1$ %
can be identified
as an error.
That will be enough to guarantee robustness.

As before, we assume that Alice's CTRL operation is given by
Eq.~\eqref{donothing}, yet now, with  
$n_{\bitz}$ and $n_{\bito}$ being any non-negative
integers. %
Let $\hat{n}_{\bito} = 1$ if $n_{\bito}\geq 1$, else $\hat{n}_{\bito}=0$; similarly, $\hat{n}_\bitz = 1$ if $n_{\bitz}\geq 1$, else
$\hat{n}_{\bitz} = 0$.
To model properly the use of a detector that clicks when noticing one or more
photons, it is assumed that in the SIFT mode Alice still attaches a probe
in the $\ket{0,0}_A$ state. Now she applies the following transform, $U_{\SIFT}$, 
on $\mathscr{H}_A\otimes \mathscr{H}_{AB}$ where $\mathscr{H}_A = 
\Span\big(\ket{0,0}_A, \ket{0,1}_A, \ket{1,0}_A, \ket{1,1}_A\big)$ 
and $\mathscr{H}_{AB}$ is $\mathscr{F}$, Alice+Bob's 
2-mode photonic space:
\begin{equation}\label{measurement}
U_{\SIFT}\ket{0,0}_A\ket{n_{\bito},n_{\bitz}}_{AB} =
\ket{\hat{n}_{\bito},\hat{n}_{\bitz}}_A\ket{n_{\bito},n_{\bitz}}_{AB} \ .
\end{equation}
Alice then measures her probe in the $\ket{0,0}_A$, 
$\ket{0,1}_A$, $\ket{1,0}_A$ and $\ket{1,1}_A$ basis;
she cannot distinguish 
$\ket{n_{\bito},0}$ with $n_{\bito} \ge 2$ 
from 
$\ket{1,0}$, yet she can distinguish 
$\ket{1,1}$  from  $\ket{1,0}$.  
When $n_{\bito}\geq 1$ or $n_{\bitz}\geq 1$ she sees $\hat{n}_{\bito}=1$ or $\hat{n}_{\bitz}=1$ (respectively);
if both
$n_{\bito}\geq 1$ and $n_{\bitz}\geq 1$ then she measures her probe in a state 
$\ket{1,1}_A$; this is telling her that the state she received is illicit.

We need to {\em carefully} define Alice's operation on the states she receives,
as the robustness analysis depends on the residual state after Alice's
``measurement'', which Alice sends back to Bob;
we now consider two legitimate options for defining
that state. 
In one, which we could call ``the conventional measure-resend approach'', 
we assume that depending on which detector clicks, the state
$\ket{0,1}$ or the state
$\ket{1,0}$ (or the state
$\ket{0,0}$ if no detector clicked) is then sent back to Bob.
However, now Eve could prepare the state 
$(\ket{0,2} + \ket{2,0})/\sqrt{2}$ and send it to Alice; in CTRL mode the
same 
state will return to Eve, while in SIFT mode only a single photon (or none) will
be given back to Eve.  Thus Eve (who can measure the number of photons)
will easily decode Alice's operation, and will be able 
to measure (and resend) in case of SIFT, or send the state 
$(\ket{0,1} + \ket{1,0})/\sqrt{2}$ back to Bob in case of CTRL.

We thus stick here to a different way of defining the residual state after Alice's
action: we simply assume that the state 
$\ket{n_{\bito},n_{\bitz}}$ is sent back to Bob in both Eq. \eqref{measurement} and Eq. \eqref{donothing}.
Incidently, that attack above is an example of a simple tagging attack.
In a separate work (in %
preparation) we present a modified photonic
classical Alice protocol 
that prevents many other tagging attacks, including the one 
suggested in~\cite{comment-on-140501} as an attack against QKD with classical Bob 
(\cite{boyer:140501}); 
see also~\cite{reply-140501}.

\fsubsection{Eve's attack on the (photonic) classical Alice protocol} %
Eve performs her attack in both directions using  
a fixed probe space $\mathscr{H}_E$; 
from Bob to Alice, Eve applies $U$; from Alice to Bob, Eve applies $V$.
The attack from Bob to Alice produces a state of the form
$
\sum \ket{E_{n_{\bito}n_{\bitz}}}\ket{n_{\bito},n_{\bitz}}_{AB} 
 \in \mathscr{H}_E\otimes\mathscr{H}_{AB}
$
where 
$\mathscr{H}_{AB} = \mathscr{F}$.
With Alice's probe attached we obtain
\begin{equation}
\ket{\Psi} = \sum \ket{E_{n_{\bito}n_{\bitz}}}\ket{0,0}_A\ket{n_{\bito},n_{\bitz}}  
\ ,
\end{equation}
in 
$\mathscr{H}_E\otimes \mathscr{H}_A \otimes \mathscr{H}_{AB}$.
In particular, if Eve does nothing then 
$\ket{E_{n_{\bito}n_{\bitz}}} \equiv \ket{E}$ independently of $n_{\bito}$ and $n_{\bitz}$, 
and the state in Alice+Eve's hands, prior to Alice's operation, is 
$\ket{E}\ket{0,0}_A\ket{\splus}_{AB}\ $.

Going back to the general case, if Alice applies $U_{\CTRL}$, 
then the state in Eve+Alice hands (after Alice's CTRL action) is still $\ket{\Psi}$.
However, if Alice applies $U_{\SIFT}$, 
the resulting global state in Eve+Alice's hands is
\[
\sum
\ket{E_{n_{\bito}n_{\bitz}}}\ket{\hat{n}_{\bito},\hat{n}_{\bitz}}_A\ket{n_{\bito},n_{\bitz}}_{AB}
\ ;
\]
after Alice has measured 
her probe she gets some output ($\{00,01,10,11\}$), 
and some complicated (non normalized) residual state (sent then back to Bob) that 
we soon analyze.

Eve now attacks that residual state on the way back from Alice to Bob 
using the 
unitary $V$ acting on both  
her probe and the state sent by Alice to Bob (see below).
Eve then sends Bob his part of the resulting state.

\fsubsection{A proof of robustness} %
Alice's measuring abilities put a constraint on the state $\ket{\Psi}$ for Eve not to be detectable: 
Alice's probability of measuring $\ket{11}_A$ according to that model must be
zero, or else Eve can be noticed.
It is thus required that $\ket{E_{n_{\bito}n_{\bitz}}} = 0$ for $n_{\bito} \times n_{\bitz} \neq 0$. 
Therefore, Eve+Alice's state when Alice applies $U_{\SIFT}$ 
must take the form
\begin{align*}
\sum_{n_{\bitz}\geq 1}& \ket{E_{0n_{\bitz}}}\ket{0,1}_A\ket{0,n_{\bitz}}+\sum_{n_{\bito}\geq 1}\ket{E_{n_{\bito}0}}\ket{1,0}_A\ket{n_{\bito},0} 
\\
&+\ket{E_{00}}\ket{00}_A\ket{0,0}
\ .
\end{align*}

Once Alice has performed her measurements and sent $\ket{i,j}_{AB}$ back to Bob via
Eve, the resulting global state (fully in Eve's hands)  is
\begin{center}
{\renewcommand{\arraystretch}{1.4}
\begin{tabular}{|c||l|}\hline
\textit{Measurement} & \textit{Residual state (in Eve's hands)}\\ \hline\hline
$00$ & $\ket{\psi_{00}}= \ket{E_{00}}\ket{0,0}_{AB}$ \\ \hline
$01$ & $\ket{\psi_{01}}=\sum_{n_{\bitz}\geq 1}
\ket{E_{0n_{\bitz}}}\ket{0,n_{\bitz}}_{AB}$ \\  \hline
$10$ & $\ket{\psi_{10}}=\sum_{n_{\bito}\geq 1}
\ket{E_{n_{\bito}0}}\ket{n_{\bito},0}_{AB}$ \\  \hline
CTRL & $\ket{\psi} = \ket{\psi_{00}} + \ket{\psi_{01}} +\ket{\psi_{10}}$ \\ \hline
\end{tabular}
}
\end{center}
where $\ket{\psi_{ij}}$ are not normalized, and where the $\ket{E_{ij}}$ were chosen by Eve. 
Eve now applies a unitary $V$ on $\mathscr{H}_E\otimes\mathscr{H}_{AB}$ 
and then
sends Bob his part of the resulting state.

Recall that Eve attacks now using the 
unitary $V$ acting on the residual state in $\mathscr{H}_E\otimes
\mathscr{H}_{AB}$, 
and then she sends Bob his part of the resulting state.
Bob's measuring abilities put more constraints on the state $\ket{\psi}$ 
for Eve not to be detectable. 
In case the SIFT bit is used for TEST,  
Bob's probability of measuring $11$ must be
zero, no matter what Alice measured. Furthermore, 
for Eve to stay undetectable, if Alice measured $\ket{0,0}$ 
(namely, the outcome $00$) in the SIFT mode, then Bob should have 
a probability zero of measuring $01$ or $10$, 
thus, a probability zero of receiving the states $\ket{1,0}$ or $\ket{0,1}$.
Similarly if Alice measured $10$ ($01$), 
then Bob should have a probability zero of measuring $01$ ($10$);
he could however get a loss, $00$.
The resulting (non normalized) Eve+Bob residual states thus take the form
$\ket{\psi_{00}'} = V \ket{\psi_{00}}  =
 \ket{H_{00}}\ket{0,0}_{AB} $ when a loss arrives, and  
\begin{align}
\ket{\psi_{01}'} &= V\ket{\psi_{01}} = 
\sum_{n_{\bitz}\geq 1}
\ket{F_{0n_{\bitz}}}\ket{0,n_{\bitz}}_{AB} 
+ \ket{H_{01}}\ket{0,0}_{AB}  \nonumber \\
\ket{\psi_{10}'} &= V\ket{\psi_{10}} = 
 \sum_{n_{\bito}\geq
1}\ket{F_{n_{\bito}0}}\ket{n_{\bito},0}_{AB} 
+ \ket{H_{10}}\ket{0,0}_{AB}  \label{final10}
\end{align}
otherwise;
$V$ being linear, the (normalized) residual state if Alice applied CTRL is 
$\ket{\psi'} \equiv V\ket{\psi} =
\ket{\psi_{00}'}+\ket{\psi_{01}'}+\ket{\psi_{10}'}$.

In order to check CTRL bits, Bob measures $\ket{\psi'}$
in the $x$ basis and checks if he gets at least one photon in any illicit state
such as $\ket{\sminus}$; more precisely,
he 
measures $\ket{\psi'}$ in the Fock basis $\ket{n_{\sminus},n_{\splus}}_x$ corresponding to the $x$ basis 
of single photon states,
and aborts if he gets $n_{\sminus}>0$ (if the detector for $\ket{\sminus}$ photons clicks).
To avoid that, Eve must make sure that the overlap between Eve-Bob's state 
$\ket{\psi'}$ and each state of the form 
$\ket{n_{\sminus},n_{\splus}}_x$ with 
$n_{\sminus}>0$ is zero.
This results with another limitation on Eve's attack.
We clarify in the appendix, sect. \ref{secfock} %supplemental material \cite[sect. C]{SOM}
the expansion of the $x$-basis Fock states $\ket{n_{\sminus},n_{\splus}}_x$ 
using the $z$-basis Fock
states $\ket{n_{\bito},n_{\bitz}}$ and prove 
the following:

\begin{lemma}[Lemma]
If Bob has 
a {\it zero} 
probability of measuring any state $\ket{n_{\sminus},n_{\splus}}_x$ with $n_{\sminus} > 0$,  
then 
 $\ket{F_{01}}  =  \ket{F_{10}}$, and  $\ket{F_{0n}} = 
\ket{F_{n0}} = 0$ for $n>1$.
\end{lemma}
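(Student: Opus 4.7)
The plan is to combine the creation-operator expansion of the $x$-basis Fock states in terms of the $z$-basis Fock states (which the appendix will establish) with a small linear algebra calculation, and then finish by a short parity argument. First I would observe that $\ket{\psi'}$ decomposes into three kinds of $z$-basis components: a vacuum piece (the $\ket{H_{ij}}\ket{0,0}_{AB}$ terms), components of the form $\ket{F_{0n}}\ket{0,n}_{AB}$, and components of the form $\ket{F_{n0}}\ket{n,0}_{AB}$, all with $n\geq 1$. Since every $\ket{n_{\sminus},n_{\splus}}_x$ with $n_{\sminus}+n_{\splus}\geq 1$ is orthogonal to $\ket{0,0}_{AB}$ by photon-number conservation, the $\ket{H_{ij}}$ contributions drop out of each overlap ${}_x\bra{n_{\sminus},n_{\splus}}\ket{\psi'}$ and can be ignored.

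Next, using $a^\dagger_{\bito}=(a^\dagger_{\splus}+a^\dagger_{\sminus})/\sqrt{2}$ and $a^\dagger_{\bitz}=(a^\dagger_{\splus}-a^\dagger_{\sminus})/\sqrt{2}$, I would expand $\ket{n,0}_z$ and $\ket{0,n}_z$ via the binomial theorem. The computation shows that both states produce the same linear combination of $\ket{k,n-k}_x$ for $k=0,\dots,n$, with identical positive coefficients, except that $\ket{0,n}_z$ acquires an extra factor $(-1)^{n_{\sminus}}$ in the $\ket{n_{\sminus},n_{\splus}}_x$ slot. Because $x$-basis Fock states with different total photon numbers are orthogonal, the condition ${}_x\braket{n_{\sminus},n_{\splus}}{\psi'}=0$ decouples across $n$, and for each $n\geq 1$ and each $n_{\sminus}>0$ with $n_{\sminus}+n_{\splus}=n$ it reduces, after cancelling the common positive prefactor, to the single vector equation $(-1)^{n_{\sminus}}\ket{F_{0n}}+\ket{F_{n0}}=0$.

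Finally, I would read off the lemma by a parity argument on $n$. For $n=1$ the only admissible choice is $n_{\sminus}=1$, yielding $\ket{F_{10}}=\ket{F_{01}}$. For $n\geq 2$, both $n_{\sminus}=1$ and $n_{\sminus}=2$ are admissible, so the two relations $\ket{F_{n0}}=\ket{F_{0n}}$ and $\ket{F_{n0}}=-\ket{F_{0n}}$ must hold simultaneously, forcing $\ket{F_{0n}}=\ket{F_{n0}}=0$.

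The main obstacle I expect is the combinatorial bookkeeping in the creation-operator expansion: verifying that the multinomial/normalization prefactors in the $\ket{n,0}_z$ and $\ket{0,n}_z$ expansions are actually identical (up to the sign $(-1)^{n_{\sminus}}$), so that the decoupled overlap equations really do collapse to the clean relation $(-1)^{n_{\sminus}}\ket{F_{0n}}+\ket{F_{n0}}=0$. Once that algebraic step is done cleanly, the parity argument above is essentially immediate.
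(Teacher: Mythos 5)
Your proposal is correct and follows essentially the same route as the paper's own proof: discard the vacuum terms and the cross-photon-number overlaps, expand $\ket{0,n}$ and $\ket{n,0}$ in the $x$-basis Fock states via the raising-operator binomial expansion (exactly the method of Sect.~\ref{subraising}), and then use the clash between the $n_{\sminus}=1$ and $n_{\sminus}=2$ overlap conditions to force $\ket{F_{0n}}=\ket{F_{n0}}=0$ for $n\ge 2$, with $n=1$ giving $\ket{F_{01}}=\ket{F_{10}}$; the paper merely repackages these same overlap equations through the even/odd states $\ket{e^{(n)}}$ and $\ket{o^{(n)}}$, which is a cosmetic difference. One small slip: your dictionary is swapped relative to the paper's conventions --- since $\bket{0}=(\ket{\splus}+\ket{\sminus})/\sqrt{2}$ and $\bket{1}=(\ket{\splus}-\ket{\sminus})/\sqrt{2}$, it is $a_{\bito}^\dagger$ (not $a_{\bitz}^\dagger$) that equals $(a_{\splus}^\dagger-a_{\sminus}^\dagger)/\sqrt{2}$, so the sign $(-1)^{n_{\sminus}}$ attaches to $\ket{n,0}$ rather than $\ket{0,n}$ --- but this is immaterial, as your relation $(-1)^{n_{\sminus}}\ket{F_{0n}}+\ket{F_{n0}}=0$ and the correct one $\ket{F_{0n}}+(-1)^{n_{\sminus}}\ket{F_{n0}}=0$ differ only by an overall factor $(-1)^{n_{\sminus}}$ and yield identical conclusions.
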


Letting $\ket{F} = \ket{F_{01}}=\ket{F_{10}}$, Eve+Bob's final residual states given 
by \eqref{final10},  
if Alice measured $01$ and $10$, are reduced to, strikingly, 
exactly the same states given (for the simpler case)
by \eqref{final-table} (respectively).
As before, if Bob measures in the $z$ basis and gets a SIFT bit, 
Eve's final state $\ket{F}$ is the same whether Bob measured $0$ or $1$ and she thus can get no information
on either Alice's measurement or Bob's result: the protocol is completely robust.

\fsection{Conclusions}%
From the above analysis we conclude that 
Bob must in the end, on CTRL bits, get
either a loss or exactly the state $\ket{\splus}$, 
which he thinks he sent. 
This does not mean that Eve's attack is trivial
(namely, she must send $\ket{\splus}$ to Alice, and do nothing on the 
way back). As the 
simplest non-trivial attack,
Eve could prepare the state $\ket{E}[\ket{0,2}+\ket{2,0}]/\sqrt{2}$,
and apply the transformation 
$V[\ket{E}\ket{0,2}] = \ket{E}\ket{0,1}; 
V[\ket{E}\ket{2,0}] = \ket{E}\ket{1,0}$
on the way back, 
without being noticed, but also, without gaining any information,
as we proved here. 

\fsection{Discussion}%
We presented  here 
a proof of robustness for two protocols 
in which Alice is classical, one that takes photon losses into 
account, and a more relevant one that also deals with multi-photon pulses.
The optimistic conclusion of robustness here is, 
unfortunately, not the end
of the story, and further research is required: 
First, we dealt in this paper only with the generalization of the qubits
of ``QKD with classical Alice'' into two modes, and we left the case of more 
modes open. 
Second, here we let 
almighty Eve prepare the state;
unfortunately, Bob is not as capable as Eve, and in reality, he
is the one preparing the state, not Eve;
Bob, who tries to generate the state $\ket{0,1}_x$,
may be unable to avoid (sometimes)
sending the state $\ket{0,2}_x$ which will often cause
a 11 reading 
in the computation basis, and destroy the full robustness.
Still, there is evidence (yet, no proof) 
that the classical Alice protocol is more robust than BB84.
See appendix, sect. \ref{sectD} % supplemental material \cite[sect. D]{SOM}
 where we also propose
three ways to improve the partial robustness (or the security) of our protocol, and of BB84.

\bigskip
\begin{center}
\large\textbf{Appendix}
\end{center}

\newcommand\tr[0]{\operatorname{tr}}
\newcommand\had[0]{\mathrm{H}}
\newcommand\lossrate[0]{\mathrm{lossrate}}
\newtheorem{lemnot}{Lemma}

\def\appendixname{Section}

\appendix
\section{On the Robustness of the Ben92 Scheme}\label{sectA}
Let Alice generate a string of qubits by choosing randomly (with equal probability) one of the two distinct and non orthogonal states
$\ket{u_\bitz}$ and $\ket{u_1}$ and sending it to Bob via a quantum channel. 
Bob measures the incoming qubits at random either in the orthonormal basis 
$\{\ket{u_\bitz}, \ket{u'_\bitz}\}$ or
in the orthonormal basis $\{\ket{u_\bito}, \ket{u'_\bito}\}$. 
If he measures $\ket{u'_\bito}$, he can be sure he was sent $\ket{u_\bitz}$
because he can't have been sent $\ket{u_\bito}$. 
Similarly, if he measures $\ket{u'_\bitz}$, he is certain he was sent $\ket{u_\bito}$.
Note that if Alice sent $\ket{u_\bitz}$ and Bob chose (with a probability half)
to measure in the $\{\ket{u_\bitz}, \ket{u'_\bitz}\}$ basis, he will surely get 
$\ket{u_\bitz}$, hence an
inconclusive result. 
With that procedure, the probability of a conclusive (un-ambiguous)
measurement~\cite{endnote9}
is 
\begin{equation}\label{pconclusivesimple}
p_{\text{conclusive}} = (1/2)[1-|\braket{u_\bitz}{u_\bito}|^2] \ .
\end{equation}

The procedure is robust against eavesdropping because if he was sent $\ket{u_\bitz}$, and he measured
in the $\{\ket{u_\bitz}, \ket{u'_\bitz}\}$ basis, and he did not get
$\ket{u_\bitz}$, then he
knows the incoming state has been tampered with; 
similarly, if he was sent $\ket{u_\bito}$ and measured $\ket{u'_\bito}$.
For a security analysis one must allow some small probability of noise, hence of
errors and/or losses.

If high loss-rate cannot be avoided, which is a typical case in QKD,
the Ben92 scheme as described here becomes totally non-robust; Bennett \cite{Ben92} was, of
course, aware of this, hence designed his protocol differently. 
In case high losses must be tolerated, such that 
$(\lossrate) \ge 1-p_{\text{conclusive}} 
= (1/2)[1+|\braket{u_\bitz}{u_\bito}|^2]$, 
an eavesdropper can simply catch all the qubits coming out of Alice's hands, 
measure according to Bob's procedure, and send Bob (via a lossless channel) the proper state only when the measurement was
conclusive, else send nothing. 
This attack was called ``conclusive attack'' in the past, but later on the term
``un-ambiguous state discrimination'' became more popular then the term
``conclusive''.

It is interesting to note (although not vital for the current paper,
hence we skip the details here)
that Eve can even do better than Bob, since she is more
powerful, using generalized measurements (POVMs) as described
in~\cite{Ivanovic87,Peres88,Dieks88}; see also~\cite{EHPP94,HIGM95,Yuen}.
Thus it can be shown that even if the lossrate is 
below  
$(1/2)[1+|\braket{u_\bitz}{u_\bito}|^2]$, yet as long as it is above 
$|\braket{u_\bitz}{u_\bito}|$,  
the protocol is still totally non-robust.

\section{On the Robustness of the BB84 Scheme}\label{sectpns}%\label{sectB}
This section should be read after the paragraphs
concerning the Fock notations for pulses with indistinguishable photons in the main article.

The photon number splitting attack was introduced in \cite{BLMS00}.  
Here is a short description in
the notations and the framework of the current article.

\subsection{Nondemolition-splitting of two photon pulses}\label{splitphoton}

We assume that Eve has an initial probe $\ket{0,0}^{\mathrm{E}}$ in the Fock space. Photonic states $\ket{n_\bito,n_\bitz}$ sent from Alice to Bob are attacked with $U\ket{0,0}^{\mathrm{E}}\ket{n_\bito,n_\bitz} = \ket{0,0}^{\mathrm{E}}\ket{n_\bito,n_\bitz}$ if $n_\bitz+n_\bito \neq 2$ and
\begin{align}
U\ket{0,0}^{\mathrm{E}}\ket{0,2} &= \ket{0,1}^{\mathrm{E}}\ket{0,1} \label{ket02}\\
U\ket{0,0}^{\mathrm{E}}\ket{2,0} &= \ket{1,0}^{\mathrm{E}}\ket{1,0} \label{ket20}\\
U\ket{0,0}^{\mathrm{E}}\ket{1,1} &= \frac{1}{\sqrt{2}}\big[\ket{1,0}^{\mathrm{E}}\ket{0,1} + \ket{0,1}^{\mathrm{E}}\ket{1,0}\big].\label{ket11}
\end{align}
The first two equations mean that if two photons in the $\bket{0}$ state, 
i.e. $\ket{0,2}$, or in 
the $\bket{1}$ state, i.e. $\ket{2,0}$, are sent, Eve keeps one.
The third equation is required for the same to hold when the 
two photons are in the $\ket{\splus}$ state, 
i.e. $\ket{0,2}_x$, or in 
the $\ket{\sminus}$ state, i.e. $\ket{2,0}_x$. 
Thus, $U$ describes a nondemolition-splitting of two photons if those are
prepared in the standard or Hadamard bases.

Let us now demonstrate the effect of $U$ in the $x$-basis.
For one photon pulses 
\begin{align}
\ket{0,1}_x &= \bket{+} = \frac{1}{\sqrt{2}}[\bket{0}+\bket{1}] = \frac{1}{\sqrt{2}}[\ket{0,1} + \ket{1,0}] \label{ket01} \\
\ket{1,0}_x &= \bket{-} =  \frac{1}{\sqrt{2}}[\bket{0}-\bket{1}] = \frac{1}{\sqrt{2}}[\ket{0,1}-\ket{1,0}] \label{ket10}
\ .
\end{align}
To express $\ket{2,0}_x$ as a superposition of the states $\ket{0,2}$, $\ket{1,1}$ and $\ket{2,0}$,
we need to discuss how to deal with indistinguishable particles; more details can be found in Sect \ref{secfock}.
The state $\ket{2,0}$ corresponds to two indistinguishable photons in the state $\ket{\sminus}$ which is a pulse
containing a state similar to $\ket{-\kern-0.1em -}$, but with no importance to the order of the
two single-photon states. Because the photons are indistinguishable, we can write
\[
\ket{-\kern-0.1em -} = \frac{1}{2}\big[\bket{0}-\bket{1}\big]\otimes\big[\bket{0}-\bket{1}\big]= \frac{1}{2} \big[\bket{00}-\bket{01}-\bket{10} + \bket{11}\big]
\]
with indistinguishable particles within each term.
The state $\bket{00}$ means two identical photons in the $\bket{0}$ state, and corresponds to $\ket{0,2}$. Similarly
the state $\bket{11}$ corresponds to two identical photons in the $\bket{1}$ state, i.e. to $\ket{2,0}$. 
Remains, after normalizing, the (already symmetric) state $\frac{\bket{01} + \bket{10}}{\sqrt{2}}$
and that state corresponds  to one photon in state $\bket{0}$ and one photon in state $\bket{1}$ permuted in all possible
ways, and then normalized, which corresponds to state $\ket{1,1}$. 
Since $\bket{--}$ for two photons equals 
$ \frac{1}{2}\big[\bket{00} - \sqrt{2}\frac{\bket{01}+\bket{10}}{\sqrt{2}} + \bket{11}\big]$ and, 
similarly, $\bket{++}$ is equal to
$ \frac{1}{2}\big[\bket{00} + \sqrt{2}\frac{\bket{01}+\bket{10}}{\sqrt{2}} +
\bket{11}\big]$, we conclude that 
\begin{align}
\ket{0,2}_x &= \frac{1}{2}\left[ \ket{0,2}+\sqrt{2}\ket{1,1} + \ket{2,0}\right] \label{twoplusses}\\
\ket{2,0}_x &= \frac{1}{2}\left[ \ket{0,2} - \sqrt{2}\ket{1,1} + \ket{2,0}\right] \label{twominusses}
\ .
\end{align}
These two identities are obtained with an alternative method (raising operators) in Sect \ref{subraising}.
From the definition of $U$ above, we can now derive the equalities
\begin{align}
U\ket{0,0}^{\mathrm{E}}\ket{0,2}_x &= \ket{0,1}^{\mathrm{E}}_x \ket{0,1}_x \label{keep0x}\\
U\ket{0,0}^{\mathrm{E}}\ket{2,0}_x &=   \ket{1,0}^{\mathrm{E}}_x \ket{1,0}_x\label{keep1x}
\ .
\end{align}
Here is the derivation of \eqref{keep1x}:
\begin{align*}
U\ket{0,0}^{\mathrm{E}}\ket{2,0}_x &= U\ket{0,0}^{\mathrm{E}}\left(\frac{\ket{0,2} - \sqrt{2}\ket{1,1}+ \ket{2,0}}{2}\right)&\text{ by \eqref{twominusses}}\\
&= \frac{\ket{0,1}^{\mathrm{E}}\ket{0,1}}{2} & \text{by \eqref{ket02}}\\
 &- \frac{\ket{1,0}^{\mathrm{E}}\ket{0,1}}{2} - \frac{\ket{0,1}^{\mathrm{E}}\ket{1,0}}{2} & \text{by \eqref{ket11}}\\
 &+\frac{\ket{1,0}^{\mathrm{E}}\ket{1,0}}{2}&\text{by \eqref{ket20}}\\
 &= \left(\frac{\ket{0,1}^{\mathrm{E}}-\ket{1,0}^{\mathrm{E}}}{\sqrt{2}}\right)\left(\frac{\ket{0,1}-\ket{1,0}}{\sqrt{2}}\right)\\
 &= \ket{1,0}^{\mathrm{E}}_x \ket{1,0}_x&\text{by \eqref{ket10}}
\end{align*}
The derivation of \eqref{keep0x} is identical, with `$+$' everywhere instead of `$-$'. Equations \eqref{ket02} and \eqref{ket20}
together with \eqref{keep0x} and \eqref{keep1x}
mean that Eve's attack is unnoticed both in the $z$ and the $x$ basis on two
identical photons: Bob receives a single photon, ``undisturbed'', and 
Eve gets full information when the basis is published.

\subsection{The PNS attack on the BB84 protocol}

In the BB84 protocol~\cite{BB84}, photons go from Alice to Bob.
For each choice of basis $b$ ($z$ or $x$) and each bit chosen randomly, it is assumed that Alice sends
$\ket{0,0}$ with probability $p_0$, $\ket{0,1}_b$ or $\ket{1,0}_b$ with probability $p_1$, and
$\ket{0,2}_b$ or $\ket{2,0}_b$ with probability $p_2$, where $\ket{n_\bito,n_\bitz}_z \equiv \ket{n_\bito,n_\bitz}$. 
We may assume that $p_0+p_1+p_2 = 1$.
We also assume $p_2\ll p_1$, and a loss rate close to $100\%$, i.e.
$F=1-(\lossrate) \ll 1$.

In $N$ trials, Bob expects $Fp_1 N$ single photon pulses from the expected $p_1 N$ single photon pulses coming from Alice.
From the expected $p_2 N$ two-photon pulses coming from Alice, the chances that the two photons will be lost
in the channel are  $(\lossrate)^2 = (1-F)^2$.
The chances that at least one photon reaches Bob are thus  $1 - (1-F)^2$ and so 
Bob expects a total of
\[
X = \left(F p_1 + [1 - (1-F)^2] p_2\right) N
\] non empty pulses.
In the PNS attack Eve makes sure Bob gets the number of pulses he is expecting.
If
\[
p_2 N \geq X = \left( F p_1 + [1 - (1-F)^2] p_2\right) N
\]
i.e. 
\[
\frac{p_2}{p_1} > \frac{F}{(1-F)^2}
\]
then the number of two photon pulses emitted by Alice, namely $p_2N$, is larger than the number of pulses Bob
 is expecting.
With the non lossy channel, 
Eve can simply select $X$ two photon pulses from those $p_2N$ pulses sent by Alice (she is able to count photons),
and attack them with the two photon pulse attack (Eq. \ref{ket02}-\ref{keep1x}), keep one photon and send Bob the other.
She thus sends $X$ single photon pulses to Bob;
 there is no way for Bob 
to check for eavesdropping; he receives exactly the number of pulses he is expecting, and as they should have 
been generated in the first place in the ideal qubit protocol.
The BB84 protocol is thus completely non robust as soon as $p_2/p_1 \geq F/(1-F)^2$ which, to a first
order approximation~\cite{endnote10},
holds when the rate of two-photon pulses amongst the non empty pulses is larger than $F$.

\section{Changing Basis in the Fock Space and the Proof of the Lemma}\label{secfock}

The quantum states of the photons manipulated by Alice and Bob can be described as states in the Fock space $\mathscr{F}$ whose Hilbert basis is given by the Fock states
$\ket{n_\bito,n_\bitz}$ where $n_\bito$ is the number of indistinguishable photons in the $\bket{1}$ state and $n_\bitz$ the number of 
indistinguishable photons in the $\bket{0}$ state. Alice and Bob however also use the
Hadamard basis $\ket{+}, \ket{-}$ and we also need to use the Fock states $\ket{n_-,n_+}_x$ as a basis for $\mathscr{F}$,
where $\ket{n_-,n_+}_x$ corresponds to $n_-$ indistinguishable  photons in the $\ket{-}$ state
and $n_+$ indistinguishable  photons in the $\ket{+}$ state. The states $\ket{n_\bitz,n_\bito}$ and 
$\ket{n_-,n_+}_x$ belong to the same space of states. How are they related? 
A pulse $\ket{n_-,n_+}_x$ with $n_-$ indistinguishable photons in the $\ket{-}$ state and 
$n_+$ in the $\ket{+}$ state can always be expressed as a superposition of pulses $\ket{n_\bito,n_\bitz}$
with $n_\bito$ photons in the $\bket{1}$ state and $n_\bitz$ photons in the $\bket{0}$ state such that
$n_+ + n_- = n_\bitz+n_\bito$. In this paper, we need to know the coefficients of that superposition when
 either $n_+$ or $n_-$ is zero.

\subsection{The symmetric state method}
We already presented formulas for $\ket{0,2}_x$ and $\ket{2,0}_x$ in Sect \ref{sectpns}, namely
Eq. \eqref{twoplusses} and
Eq \eqref{twominusses}. The very same reasoning can be 
applied with three indistinguishable photons in the $\bket{-}$ state.
Expanding $\ket{-}^{\otimes 3}$, which corresponds to three photons in the $\ket{-}$ state, thus $\ket{3,0}_x$,
gives
\[
\frac{\bket{000} - \bket{001} - \bket{010} + \bket{011} -
\bket{100} + \bket{101} + \bket{110} - \bket{111}}{\sqrt{8}}
\]
and using the following normalized states as a representation for the Fock states
\begin{align*}
\ket{0,3} &= \bket{000} &\text{all photons in the $\bket{0}$ state}\\
\ket{1,2} &= \frac{1}{\sqrt{3}}[\bket{100}+\bket{010}+\bket{001}] \\
\ket{2,1} &= \frac{1}{\sqrt{3}}[\bket{110} + \bket{101} + \bket{011}] \\
\ket{3,0} &= \bket{111} &\text{all photons in the $\bket{1}$ state}
\end{align*}
we obtain
\[
\ket{3,0}_x = \frac{1}{\sqrt{8}}\left[\ket{0,3} -\sqrt{3}\ket{1,2} + \sqrt{3}\ket{2,1} - \ket{3,0}\right]
\]
Of course, $\ket{0,3}_x$ gives the same expansion, but with $+$ everywhere.
This reasoning generalizes to $\ket{0,n}_x$ and $\ket{n,0}_x$ provided we accept
that $n-k$ photons in the $\bket{0}$ state and $k$ photons in the $\bket{1}$ state
corresponds to the equal superposition of all $n$ qubit basis states $\ket{j}$ that have
$k$ bits equal to $\bito$ and thus $n-k$ bits equal to $\bitz$, i.e. Hamming weight $|j| = k$. Notice that the number
of $n$-bit strings with Hamming weight $k$ is  $\binom{n}{k}$ and the normalizing factor is thus
$\binom{n}{k}^{-1/2}$; the symmetric state representation is
\begin{align}
\ket{k,n-k}\ =\ \binom{n}{k}^{-1/2} \sum_{\stackrel{\scriptstyle j\in\{\bitz,\bito\}^n}{|j| = k}} \ket{j}\label{ssrep}
 \end{align}
Using the well known (within the quantum information community \cite[p 35]{NielsenChuang2000}) formula 
\begin{equation}\label{hadamardn}
\had^{\otimes n}\ket{i} = \frac{1}{\sqrt{2}^n}\sum_{j\in \{\bitz,\bito\}^n} (-1)^{i\cdot j}\ket{j},
\end{equation}
 where $\had$ is the Hada\-mard transform i.e.
$\had\bket{0} = \ket{+}$ and $\had\bket{1} = \ket{-}$, we deduce the general
formula
\begin{align}
\ket{0,n}_x &= \frac{1}{{\sqrt{2}}^n} \sum_{k=0}^n \binom{n}{k}^{1/2} \ket{k,n-k} \label{nzeros}
\end{align}
as follows:
\begin{align*}
\ket{0,n}_x &= \frac{1}{\sqrt{2}^n} \sum_{j\in\{\bitz,\bito\}^n} \ket{j} &\text{by \eqref{hadamardn}}\\
&=  \frac{1}{\sqrt{2}^n} \sum_{k=0}^n\ \sum_{|j|=k}\ \ket{j}&\text{group by number of ones}\\
&=  \frac{1}{\sqrt{2}^n} \sum_{k=0}^n \binom{n}{k}^{1/2} \ket{k,n-k} &\text{by \eqref{ssrep}}
\end{align*}
Similarly, $\ket{n,0}_x$ corresponds to $\ket{-}^{\otimes n} = \left(\had\bket{1}\right)^{\otimes n} =
\had^{\otimes n}\ket{i}$  for $i=\bket{1\ldots 1}$ ($n$ ``$\bito$'' bits); then $(-1)^{i\cdot j} = (-1)^{|j|} = (-1)^k$ in
\eqref{hadamardn} and 
\begin{align}
\ket{n,0}_x&= \frac{1}{{\sqrt{2}}^n} \sum_{k=0}^n (-1)^k \binom{n}{k}^{1/2}\ket{k,n-k}. \label{nones}
\end{align}
Since it is also true that $\had\ket{+} = \bket{0}$ and $\had\ket{-} = \bket{1}$, formulas
\eqref{nzeros} and \eqref{nones} hold if we move the index $x$ from the left to the
right to express $\ket{0,n}$
and $\ket{n,0}$ in terms of the $\ket{k,n-k}_x$.

\subsection{The raising operator method}\label{subraising}
That method should be more congenial to anyone having some knowledge of
quantum field theory or quantum optics. It can be shown that to  the standard basis $\bket{0}$ and
$\bket{1}$ corresponds a set of two commuting operators that we will denote $a_\bitz^\dagger$ and
$a_\bito^\dagger$ such that
\begin{equation}\label{raisingstandard}
{a_\bito^\dagger}^{n_\bito} {a_\bitz^\dagger}^{n_\bitz} \ket{0,0}=\sqrt{n_\bito!n_\bitz!}\ \ket{n_\bito,n_\bitz}
\end{equation}
Similarly, to the basis $\ket{+},\ket{-}$ corresponds the set of commuting operators
$a_+^\dagger$ and $a_-^\dagger$, and they are such that
\begin{equation}\label{raisinghadamard}
{a_-^\dagger}^{n_-} {a_+^\dagger}^{n_+}\ket{0,0} =\sqrt{n_-!n_+!}\ \ket{n_-,n_+}_x
\end{equation}
Moreover, relating the operators $a_+^\dagger$ and $a_-^\dagger$ to
$a_\bitz^\dagger$ and $a_\bito^\dagger$ is quite straightforward. From
$\ket{+} = \frac{1}{\sqrt{2}}[\bket{0}+\bket{1}]$ and $\ket{-} = \frac{1}{\sqrt{2}}[\bket{0}-\bket{1}]$
we are allowed to deduce
\begin{align}
a_+^\dagger = \frac{1}{\sqrt{2}}\left[a_\bitz^\dagger + a_\bito^\dagger\right],\quad
a_-^\dagger = \frac{1}{\sqrt{2}}\left[a_\bitz^\dagger - a_\bito^\dagger\right] .\label{chbas}
\end{align}
All calculations are then direct without any
intermediate symmetric state representation. Here they are for 
$\ket{2,0}_x$ (using the fact
that ${a_+^\dagger}^0$ is the identity and $0! = 1$):
\begin{align*}
\ket{2,0}_x &= \frac{1}{\sqrt{2!}} {a_-^\dagger}^2\ket{0,0}
&\text{by \eqref{raisinghadamard}}\\
&= \frac{1}{\sqrt{2}^2}\frac{1}{\sqrt{2!}} [a_\bitz^\dagger - a_\bito^\dagger]^2\ket{0,0}
&\text{by \eqref{chbas}}\\
&= \frac{1}{\sqrt{2}^2}\frac{1}{\sqrt{2!}} \left[{a_\bitz^\dagger}^2 - 2 {a_\bitz^\dagger}{a_\bito^\dagger} + {a_\bito^\dagger}^2\right]\ket{0,0}\\
&=  \frac{1}{\sqrt{2}^2}\frac{1}{\sqrt{2!}}\left[
\sqrt{2!}\ket{0,2} -2\sqrt{1!1!}\ket{1,1} + \sqrt{2!}\ket{2,0} \right]
&\text{by \eqref{raisingstandard}}\\
&= \frac{1}{2}\left[\ket{0,2} -\sqrt{2}\ket{1,1} + \ket{2,0}\right]
\end{align*}
which coincides with \eqref{twominusses}. Newton's binomial expansion may be applied because the operators commute. 
To get \eqref{nzeros} for $\ket{0,n}_x$ and \eqref{nones} for $\ket{n,0}_x$ one needs only
follow the same reasoning as
above with $n$ instead of $2$:

\begin{align*}
\ket{n,0}_x &=\frac{1}{\sqrt{n!}} {a_-^\dagger}^n\ket{0,0}\\ %
 &=  \frac{1}{\sqrt{n!}}\left[\frac{a_\bitz^\dagger - a_\bito^\dagger}{\sqrt{2}}\right]^n\kern-0.5em\ket{0,0}\\ %
 &= \frac{1}{\sqrt{2}^n}\frac{1}{\sqrt{n!}}\sum_{k=0}^n \frac{n!}{(n-k)!k!} (-1)^k{a_\bitz^\dagger}^{n-k}{a_\bito^\dagger}^k\ket{0,0}\\ %
 &=  \frac{1}{\sqrt{2}^n}\frac{1}{\sqrt{n!}}\sum_{k=0}^n \frac{n!}{(n-k)!k!} (-1)^k\sqrt{(n-k)!k!}\ \ket{k,n-k}\\ %
 &=  \frac{1}{\sqrt{2}^n}\sum_{k=0}^n (-1)^k \binom{n}{k}^{1/2}  \ket{k,n-k}.
\end{align*}
 
 \subsection{Two distinguished states}
 The amplitudes in Eq \eqref{nzeros} and Eq \eqref{nones} are exactly the same, but to a phase factor,
 and in both cases the distribution of the number of identical photons in state $\bket{1}$ if we repeat the measurement in the standard basis is the binomial
 $B(n,p=1/2)$, for which $f(k;n,p) = \binom{n}{k}p^k(1-p)^{n-k}$ with $p=1/2$. However,
there is a negative sign on the amplitude for odd values of $k$ in
 Eq \eqref{nones}. That means that those amplitudes cancel when $\ket{0,n}_x$ and $\ket{n,0}_x$ are added. That interference gives two
  distinguished states:
 \begin{align*}
 \ket{e^{(n)}}_x= \frac{\ket{0,n}_x+\ket{n,0}_x}{\sqrt{2}} &= \sum_{\stackrel{\scriptstyle k=0}{\scriptstyle k \equiv 0 \pmod 2}}^n  \left(\frac{\binom{n}{k}}{2^{n-1}}\right)^{1/2} \ket{k,n-k}\\ %
 \ket{o^{(n)}}_x=  \frac{\ket{0,n}_x-\ket{n,0}_x}{\sqrt{2}} &= \sum_{\stackrel{\scriptstyle k=0}{\scriptstyle k \equiv 1 \pmod 2}}^n   \left(\frac{\binom{n}{k}}{2^{n-1}}\right)^{1/2}  \ket{k,n-k} %
 \end{align*}
 where ``$k \equiv 0 \pmod 2$'' means ``$k$ even'', the sum being over all even values of $k$,  and ``$k \equiv 1 \pmod 2$'' means ``$k$ odd'', the sum
 being over all odd values of $k$.
 The states $\ket{e^{(n)}}_x$ and $\ket{o^{(n)}}_x$ are clearly orthogonal. Moreover, if we measure state $\ket{e^{(n)}}_x$ (resp. state $\ket{o^{(n)}}_x$) %
 in the standard basis, 
 the probability of getting $k$ identical photons in state $\bket{1}$ for $k$ even (resp. for $k$ odd) is now $2f(k; n,1/2)$, twice that given by the binomial distribution, and it is $0$ otherwise:
 we never ever get and odd number of $\bket{1}$ photons with state 
$\ket{e^{(n)}}_x$ %
and never ever get an even number of $\bket{1}$ photons
 with state 
$\ket{o^{(n)}}_x$.%
 
 Similarly, if $\ket{e^{(n)}} = \big(\ket{0,n}+\ket{n,0}\big)/\sqrt{2}$ and $\ket{o^{(n)}} = \big(\ket{0,n}-\ket{n,0}\big)/\sqrt{2}$ then
 \begin{align}
 \ket{e^{(n)}} &=  \sum_{\stackrel{\scriptstyle k=0}{\scriptstyle k \equiv 0 \pmod 2}}^n  \left(\frac{\binom{n}{k}}{2^{n-1}}\right)^{1/2} \ket{k,n-k}_x \label{evenminus}\\
 \ket{o^{(n)}} &=  \sum_{\stackrel{\scriptstyle k=0}{\scriptstyle k \equiv 1 \pmod 2}}^n  \left(\frac{\binom{n}{k}}{2^{n-1}}\right)^{1/2} \ket{k,n-k}_x \label{oddminus}
  \end{align}
  \subsection{The concluding lemma}
 To prove the robustness of the photonic classical Alice with losses and multi photon pulses, we relied on the 
 following result:
 \begin{lemnot}[Lemma]
Given the bipartite state $\ket{\psi'} = \ket{H}\ket{0,0} + 
\sum_{n_{\bitz}\geq 1} \ket{F_{0n_{\bitz}}}\ket{0,n_{\bitz}} 
+ \sum_{n_{\bito}\geq 1}\ket{F_{n_{\bito}0}}\ket{n_{\bito},0}$ in $\mathscr{H}_E\otimes \mathscr{F}$,
if there is a zero probability of measuring any basis states  $\ket{n_-,n_+}_x$ of $\mathscr{F}$ such that $n_- > 0$, then
 $\ket{F_{01}}  =  \ket{F_{10}}$, and  $\ket{F_{0n}} = 
\ket{F_{n0}} = 0$ for $n>1$.
\end{lemnot}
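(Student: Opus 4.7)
My plan is to translate the vanishing-probability hypothesis into the statement that the $\mathscr{H}_E$-valued vector ${}_x\braket{n_-,n_+}{\psi'}$ is zero for every pair $(n_-,n_+)$ with $n_->0$, and then to compute these projections explicitly using the $x$-basis Fock expansions established as \eqref{nzeros} and \eqref{nones}. Since Eve's probe is an untouched tensor factor, zero probability of measuring $\ket{n_-,n_+}_x$ is equivalent to this $\mathscr{H}_E$-vector vanishing. Total photon number is preserved by the change of basis, so for fixed $N := n_- + n_+$ only the $N$-photon sector of $\ket{\psi'}$ contributes: the vacuum term $\ket{H}\ket{0,0}$ is orthogonal to the $N\geq 1$ sector, and among the sums only $\ket{F_{0N}}\ket{0,N}$ and $\ket{F_{N0}}\ket{N,0}$ lie in the relevant sector.

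Next I invoke \eqref{nzeros} and \eqref{nones}, which, since the Hadamard transform is self-inverse (as noted in the paper right after \eqref{nones}), can be read in reverse to give
\[
\ket{0,N} \;=\; \frac{1}{\sqrt{2}^{\,N}}\sum_{k=0}^{N}\binom{N}{k}^{1/2}\ket{k,N-k}_x, \qquad
\ket{N,0} \;=\; \frac{1}{\sqrt{2}^{\,N}}\sum_{k=0}^{N}(-1)^k\binom{N}{k}^{1/2}\ket{k,N-k}_x.
\]
Taking the inner product with $\ket{n_-,N-n_-}_x$ (only the $k=n_-$ term survives) and summing the two contributions gives
\[
{}_x\braket{n_-,n_+}{\psi'} \;=\; \frac{\binom{N}{n_-}^{1/2}}{\sqrt{2}^{\,N}}\Bigl(\ket{F_{0N}} + (-1)^{n_-}\ket{F_{N0}}\Bigr).
\]
Since the prefactor is non-zero for $1\leq n_- \leq N$, the hypothesis forces $\ket{F_{0N}} + (-1)^{n_-}\ket{F_{N0}} = 0$ for every such $n_-$.

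Finally I case-split on $N$. For $N=1$ the only choice is $n_-=1$, giving $\ket{F_{01}} - \ket{F_{10}} = 0$, i.e., $\ket{F_{01}} = \ket{F_{10}}$. For $N\geq 2$, taking $n_- = 1$ and $n_- = 2$ yields simultaneously $\ket{F_{0N}} = \ket{F_{N0}}$ and $\ket{F_{0N}} = -\ket{F_{N0}}$, so $\ket{F_{0N}} = \ket{F_{N0}} = 0$. No step here is genuinely difficult — the entire argument is linear-algebraic bookkeeping once the $z$-to-$x$ Fock expansions are in hand; the only point requiring care is precisely the photon-number conservation that cleanly decouples the sectors and leaves only two terms contributing to each projection.
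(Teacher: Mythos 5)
Your proof is correct and follows essentially the same route as the paper's: both reduce, via photon-number conservation, to the overlap of $\ket{F_{0N}}\ket{0,N}+\ket{F_{N0}}\ket{N,0}$ with the $x$-basis Fock states of the $N$-photon sector, and then exploit the alternating sign $(-1)^{n_-}$ coming from the binomial expansions \eqref{nzeros} and \eqref{nones}. The only cosmetic difference is that the paper packages that sign through the intermediate even/odd states $\ket{e^{(n)}}$ and $\ket{o^{(n)}}$, whereas you project directly onto $\ket{n_-,n_+}_x$; the resulting conditions and case analysis (equality from odd $n_-$, vanishing forced once $n_-=2$ is also available for $N\ge 2$) are identical in content.
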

\begin{proof}
The overlap of $\ket{n_-,n_+}_x$ with $\ket{0,0}$ is $0$; so is its overlap of $\ket{n_-,n_+}_x$ with any
$\ket{n_\bito,0}$ or $\ket{0,n_\bitz}$ for
which $n_\bitz\neq n_++n_-$ or $n_\bito \neq n_++n_-$. We thus need consider only cases where $n  = n_\bitz = n_\bito  = n_++n_-$
and thus the overlap of $\ket{n_-, n - n_-}_x$ with the state 
$\ket{F_{0n}}\ket{0,n} + \ket{F_{n0}}\ket{n,0}$. 
 A simple calculation shows that
\[
 \ket{F_{0n}}\ket{0,n}+ \ket{F_{n0}}\ket{n,0}
 \]
  is equal to
\begin{equation}\label{difwithn}
  \left[\frac{\ket{F_{0n}}+\ket{F_{n0}}}{\sqrt{2}}\right]\ket{e^{(n)}}+  \left[\frac{\ket{F_{0n}}-\ket{F_{n0}}}{\sqrt{2}}\right]\ket{o^{(n)}}
\end{equation}
For $n=1$ the probability of measuring $\ket{1,0}_x$ must be $0$. Since $\braket{e^{(1)}}{1,0}_x = 0$ (because $1$ is odd) and $\braket{o^{(1)}}{1,0}_x = 1$ 
the probability of measuring $\ket{1,0}_x$ is zero iff $\big[\ket{F_{01}} - \ket{F_{10}}\big]/\sqrt{2} = 0$ i.e. $\ket{F_{01}} = \ket{F_{10}}$.

For $n=2$ Eves must make sure the probability of measuring both $\ket{1,1}_x$ and $\ket{2,0}_x$ must be $0$.
From  $\braket{e^{(2)}}{1,1}_x = 0$ and $\braket{o^{(2)}}{1,1}_x = 1$, the probability of measuring $\ket{1,1}_x$ is zero iff
$\big[\ket{F_{02}}-\ket{F_{20}}\big]/\sqrt{2} = 0$, i.e. if $\ket{F_{02}} = \ket{F_{20}}$.
On the other hand $\braket{o^{(2)}}{2,0}_x = 0$ and $\braket{e^{(2)}}{2,0}_x = 1/\sqrt{2}$ and the probability of measuring $\ket{2,0}_x$ is zero
iff $\big[\ket{F_{02}}+\ket{F_{20}}\big]/2 = 0$ i.e. if $\ket{F_{02}} = - \ket{F_{20}}$. For both probabilities to be zero, it is necessary
and sufficient that $\ket{F_{02}} = \ket{F_{20}} = 0$.

For $n>2$, any odd $k$ is such that 
$\braket{e^{(n)}}{k,n-k}_x = 0$, and for the vector coefficient of $\ket{o^{(n)}}$ in \eqref{difwithn}
to be $0$,
$\ket{F_{n0}} = \ket{F_{0n}}$ is required; similarly, for any $k>0$ even, $\braket{o^{(n)}}{k,n-k}_x = 0$,
which implies $\ket{F_{0n}} = -\ket{F_{n0}}$, and thus
 $\ket{F_{0n}}=\ket{F_{n0}} = 0$.
\end{proof}

\section{Extended Discussion}\label{sectD}

\subsection{PNS attack on QKD with classical Alice when Bob may send two photon pulses}

As in Appendix B, when two-photon pulses are sometimes sent,
we can no longer get a proof of full robustness.
Still, we provide here some evidence that QKD with classical Alice is 
potentially more robust than BB84. Let us examine an extreme case.
In BB84 (see Sect. \ref{sectpns}), if the originator (Alice)
sends (without being aware of it) only two-photon pulses, Eve gets full
information without being noticed, and nothing in the tests performed by Alice
and Bob can reveal the deviation from the original protocol. 
In QKD with classical Alice, it is not so. If the originator (Bob) 
sends (without being aware of it) only two-photon pulses, Eve gets full
information via the nondemolition splitting as in section \ref{splitphoton},
yet now, Alice and Bob can easily notice the deviation from
the ideal protocol:
Alice will notice that on half of the SIFT bits both her detectors click.
  
Let us also consider the case in which Bob sometimes generates two-photon pulses,
and the loss rate is very large, much beyond that one considered in section \ref{sectpns},
so that $p_2 \gg p_1 F$. We have seen that the BB84 protocol is then totally
non-robust (Eve gets full information).
Is it also true for QKD with classical Alice?
If Eve blocks all single-photon pulses, again, half the non-empty 
SIFT bits will cause both
detectors to click, and Eve will be noticed.
To keep Alice having less than half such illegitimate detections, Eve can 
either let some single-photon pulses go to Alice, or Eve can send,
in addition to $\ket{0,2}_x$, states such as
$\ket{2,0}$ and $\ket{0,2}$, their superpositions and their mixtures.
States such as 
$\ket{2,0}$ and $\ket{0,2}$, or their mixtures, will cause errors in case Alice
applies CTRL and Bob measures in the $x$ basis.
States such as 
$[\ket{2,0}+\ket{0,2}]/\sqrt{2}$, which we already met more than once here,
will not cause errors in case Alice use CTRL conditioned on Eve applying the
proper transformation $V$ such that the state received by Bob is ``$\splus$''.
However, such states are as robust as the state ``$\splus$'' of the untouched
protocol, thus we may conclude (although we do not attempt to provide a full
proof here) that the further Eve's attack is from causing half the SIFT bits to
have double-detections, the more robust the protocol is.

\subsection{Three ways of strengthening our QKD with classical Alice protocol}

We have seen in the main paper a  
simple, yet non-trivial attack:
Eve could prepare the state $\ket{E}[\ket{0,2}+\ket{2,0}]/\sqrt{2}$,
send it to Alice, 
and apply the transformation
$V[\ket{E}\ket{0,2}] = \ket{E}\ket{0,1}; 
V[\ket{E}\ket{2,0}] = \ket{E}\ket{1,0}$
on the way back,
without being noticed, but also, without gaining any information,
as we proved.
Although Eve gains nothing from that attack, it is potentially disturbing.
It means that Eve can {\em totally} deviate from the protocol without being noticed,
and such a situation is not desired; it could have a strong impact on security
when noise is allowed, and/or when Bob sometimes sends $\ket{0,2}_x$.

We now present three ways of improving the protocol and potentially 
making it more secure (although a security analysis is beyond the scope of this
paper); two of these methods prevents Eve from applying the above-mentioned
attack.

\begin{itemize}
\item
Technology improvement: 
Replacing the detectors by counters that can (at least) distinguish a single
photon from more than one photon. Obviously, Alice's ability to distinguish in
SIFT mode a single photon from more than one photon prevents the above-mentioned
attack. Also, it allows Alice to obtain meaningful statistics in case Bob 
sometimes sends two photons in the state $\ket{0,2}_x$, 
as the case of Alice measuring $\ket{1,1}$ can now be compared to the cases of
measuring $\ket{2,0}$ and $\ket{0,2}$. 
\item 
Algorithmic improvement:
by adding more tests into the protocol we can improve its potential security.
So far we only discussed the case in which 
Alice applies SIFT and Bob measures in the $z$ basis, and the case in which Alice
applies CTRL and Bob measures in the $x$ basis. However, 
Alice and Bob can easily add two tests: Alice applies SIFT and Bob measures in the $x$
basis, and Alice applies CTRL and Bob measures in the $z$ basis; such a
modification could happen anyhow in real life QKD, because Bob is not currently using a
quantum memory; see end note 10 in the main paper. 
While these tests do not help against the above-mentioned attack they do help
having a better estimate of the states Bob generates: in case Bob sometimes  generates
$\ket{0,2}_x$, this can be noticed as a measurement of (1,1) in both
Bob's detectors when Alice employs CTRL and Bob measures in the $z$ basis, and
similarly (a detection in both Bob's detectors) when Alice employs SIFT 
yet does not get (1,1), and Bob measures in the $x$ basis.
\item
Protocol modification: Let us allow quantum Bob to add more states: 
We noted that in ``QKD with classical Bob'' the quantum originator sent not
only $\ket{\splus}$ but also other states such as $\ket{\bitz}$. Then the quantum
originator and the classical party performed their TEST on qubits going from the
quantum originator to the classical party.  In contrast, in QKD with classical
Alice,
one only defined the TEST on qubits going back from classical Alice to
(quantum) Bob.  We could allow our quantum Bob send also the states
$\ket{\bitz}$ and $\ket{\bito}$ and let him and classical Alice use those added
qubits only for an additional TEST, comparing bits when Bob generated these
states and Alice applied SIFT. Such a modification trivially prevents Eve from
applying the above-mentioned attack, since the protocol involves (on the way to
Alice) one of three
non-orthogonal states in each transmission, thus if Eve always sends the
above-mentioned state, she will be easily detected. 
\end{itemize}

Each of those modifications could only strengthen 
the protocol. Potentially they can also be combined together. 
The use of counters, and the use of tests in different bases could also be
helpful for improving BB84.

\end{document}